\documentclass[11pt]{amsart}

\usepackage{graphicx}
\usepackage{dcolumn}
\usepackage{bm}
\usepackage{mathrsfs}
\usepackage{amsrefs}
\usepackage{hyperref}
\usepackage{amssymb}
\usepackage{amsaddr}
\usepackage{amsmath}
\usepackage{amssymb}
\usepackage{enumitem}
\usepackage{dsfont}

\usepackage{xcolor} % Remove from the final version!

\newcommand{\mcal}[1]{\mathcal{#1}}

\newcommand{\mbb}[1]{\mathbb{#1}}

\newcommand{\veps}{\varepsilon}

\newcommand{\N}{\mbb{N}}

\newcommand{\norm}[1]{\left\|#1\right\|}
\newcommand{\abs}[1]{\left|#1\right|}

\newtheorem{thm}{Theorem}
\newtheorem{cor}[thm]{Corollary}
\newtheorem{lem}[thm]{Lemma}
\newtheorem{prop}[thm]{Proposition}
\theoremstyle{definition}
\newtheorem{ex}{Example}

\begin{document}

\title[Semigroup approximations with non-uniform...]{Chernoff's product formula: Semigroup approximations with non-uniform time intervals}
\author{J. Z. Bern\'ad}
\address{Forschungszentrum J\"ulich, Institute of Quantum Control,
Peter Gr\"unberg Institut (PGI-8), 52425 J\"ulich, Germany}
\email{j.bernad@fz-juelich.de}
\author{A. B. Frigyik}
\address{\'Obuda University, 1081 Budapest, Hungary}
\email{frigyik.andras@bgk.uni-obuda.hu}

\date{\today}

\begin{abstract}
Often, when we consider the time evolution of a system, we resort to
approximation: Instead of calculating the exact orbit, we divide the time
interval in question into uniform segments. Chernoff's results in this direction
provide us with a general approximation scheme. There are situations when we
need to break the interval into uneven pieces. In this paper, we explore
alternative conditions to the one found by Smolyanov {\it et al.} such that Chernoff's original result can be extended to unevenly distributed time intervals. 
Two applications concerning the foundations of quantum mechanics and the central limit theorem are presented.
\end{abstract}

%\pacs{Valid PACS appear here}% PACS, the Physics and Astronomy
                             % Classification Scheme.
%\keywords{Suggested keywords}%Use showkeys class option if keyword
                              %display desired
\maketitle

\section{Introduction}\label{sec1}

Many physically relevant 
processes are modeled by evolution equations.  %serves as a model, 
These evolution equations may
contain terms with different mathematical properties. While each term can be analyzed and understood separately, it  
may not be clear how the 
entire evolution takes place. Typical examples are the Feynman-Kac formula and Feynman path integrals in quantum mechanics \cite{Johnson}. Perturbation and approximation are some of the standard methods we turn to if we want to obtain the complete evolution. 
Quite often, we can associate a semigroup to an evolution equation and in that case what we would like to obtain is the generator of that semigroup. 
An important milestone of this project was Trotter's seminal paper \cite{Trotter}, which later was generalized by Chernoff \cite{Chernoff} presenting a product formula capable of reproducing many of the exponential formulas of semigroup theory. This product formula has been the subject of many further 
investigations. Some improved its rate of convergence (e.g. \cite{Zagrebnov1}), others discussed its behavior in the operator norm 
topology (e.g. \cite{Zagrebnov2}), while yet others extended %extending 
its use to 
the non-autonomous differential equations (e.g. \cite{Vuillermot}) or to a framework with mean ergodic theorems \cite{Bernad}. In this
article, we focus on an extension of Chernoff's result to the situation where
the evolution steps of the terms in the product are not equidistant. Such an extension requires slightly stronger assumptions to obtain the same statement as Chernoff's
and therefore first we review the results of Refs. \cites{Smolyanov1,Smolyanov2}. Then, we provide a different approach. Finally, we demonstrate the usefulness of these types of results through a couple of examples from quantum mechanics and probability theory.

\section{Approximation with non-uniform time intervals}

In paper \cite{Chernoff}, Chernoff states the following theorem (see also \cite{engel1999one}): 

\begin{thm}
\label{Thm:Chernoff}
  Suppose $X$ is a Banach space and there is a function
  \begin{equation}
    V:[0,\infty) \to \mcal{B}(X), \nonumber
 \end{equation}
from the non-negative reals into the bounded linear operators on $X$. Function
$V$ satisfies the following conditions: it is power bounded, i.e. there is an $M \geq 1$
such that $\norm{[V(t)]^m}\leq M$ for all $m\in \N$ and for each $t \geq 0$, and  $V(0)=I$, where $I$ is the identity operator on $X$.

If it is true that
\begin{equation}
  Ax := \lim_{h\to 0} \frac{V(h)x-x}{h} \nonumber
\end{equation}
exists for all $x\in D(A)\subset X$, where $D(A)$ is a dense subspace of $X$ such
that, for some $\lambda_0 > 0$ the subspace $(\lambda_0-A)D(A)$ is dense in $X$, as
well, then the closure $\overline{A}$ of $A$ generates a strongly
continuous semigroup $(T(t))_{t\geq 0}$. For every $x\in X$
\begin{equation}
  T(t)x = \lim_{n\to \infty} V^n\left(t/n\right)x, \label{eq:Chernoff}
\end{equation}
where the limit is uniform on $[0,t_0]$ for any given $t_0\in [0,\infty)$.
\end{thm}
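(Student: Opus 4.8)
The plan is to decouple the statement into two parts proved by different tools: a \emph{generation} part, in which I show that $\overline{A}$ generates a strongly continuous, $M$-bounded semigroup $(T(t))_{t\geq 0}$, and an \emph{approximation} part, in which I establish the product formula \eqref{eq:Chernoff}. The object linking the two is the bounded operator
\begin{equation}
  B_t := \frac{V(t)-I}{t}\,, \qquad t>0, \nonumber
\end{equation}
which generates the uniformly continuous semigroup $(\ee^{sB_t})_{s\geq 0}$ and which, by hypothesis, satisfies $B_t x \to Ax$ as $t\to 0$ for every $x\in D(A)$.

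For the generation part, I would first exploit power boundedness. Expanding
\begin{equation}
  \ee^{sB_t} = \ee^{-s/t}\sum_{k=0}^{\infty}\frac{(s/t)^k}{k!}\,[V(t)]^k \nonumber
\end{equation}
and estimating term by term with $\norm{[V(t)]^k}\leq M$ gives $\norm{\ee^{sB_t}}\leq M$ for all $s,t>0$; the Hille--Yosida theorem then yields the uniform resolvent bounds $\norm{(\lambda-B_t)^{-n}}\leq M\lambda^{-n}$ for all $\lambda>0$ and $n\in\N$. The next step is to let $t\to 0$. For $x\in D(A)$ the algebraic identity $(\lambda-B_t)^{-1}(\lambda-A)x = x + (\lambda-B_t)^{-1}(B_t-A)x$, combined with $\norm{(\lambda-B_t)^{-1}}\leq M/\lambda$ and $B_tx\to Ax$, shows that $(\lambda-B_t)^{-1}(\lambda-A)x\to x$. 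Specializing to $\lambda=\lambda_0$ and using that $(\lambda_0-A)D(A)$ is dense, the uniform bound lets me extend the strong limit $R_{\lambda_0}:=\lim_{t\to0}(\lambda_0-B_t)^{-1}$ to a bounded operator on all of $X$ with $\norm{R_{\lambda_0}}\leq M/\lambda_0$; the resolvent equation then propagates existence of the strong limit $R_\lambda$ to every $\lambda>0$.

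The crux of the generation part --- and, in my estimation, the main obstacle --- is to identify this limiting family with the resolvent of $\overline{A}$. The $R_\lambda$ inherit the pseudo-resolvent identity from $(\lambda-B_t)^{-1}$; their common range contains $D(A)$ and is hence dense, while the density of $(\lambda_0-A)D(A)$ is exactly what forces the common kernel to be trivial. Together these facts promote the pseudo-resolvent to the genuine resolvent of a closed, densely defined operator, which one checks to be $\overline{A}$. Passing the bounds $\norm{(\lambda-B_t)^{-n}}\leq M\lambda^{-n}$ to the limit gives $\norm{(\lambda-\overline{A})^{-n}}\leq M\lambda^{-n}$, and Hille--Yosida then certifies that $\overline{A}$ generates a strongly continuous semigroup $(T(t))_{t\geq0}$ with $\norm{T(t)}\leq M$. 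A by-product of this construction is that $B_{t}\to\overline{A}$ in the strong resolvent sense as $t\to 0$.

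For the approximation part, I would compare $[V(t/n)]^n$ with $\ee^{t B_{t/n}}=\ee^{\,n(V(t/n)-I)}$, the two being equal exponents since $tB_{t/n}=n(V(t/n)-I)$. Writing $C=V(t/n)$ and telescoping $C^k-C^n = C^{\min(k,n)}(C^{|k-n|}-I)$ under the power bound yields the Chernoff-type estimate $\norm{C^n x - \ee^{\,n(C-I)}x}\leq M^2\sqrt{n}\,\norm{(C-I)x}$, where the factor $\sqrt{n}$ comes from a Cauchy--Schwarz bound on $\ee^{-n}\sum_k\frac{n^k}{k!}\abs{k-n}$. Since $\norm{(V(t/n)-I)x}=(t/n)\norm{B_{t/n}x}$ and $B_{t/n}x\to Ax$ stays bounded, the right-hand side is $O(1/\sqrt{n})$ for $x\in D(A)$, so $[V(t/n)]^n x$ and $\ee^{tB_{t/n}}x$ share the same limit. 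Because $B_{t/n}\to\overline{A}$ in the strong resolvent sense and the semigroups $(\ee^{sB_{t/n}})$ are uniformly $M$-bounded, the Trotter--Kato approximation theorem gives $\ee^{tB_{t/n}}x\to T(t)x$ uniformly for $t$ in compact sets. Combining the two estimates proves \eqref{eq:Chernoff} for $x\in D(A)$; the uniform bound $\norm{[V(t/n)]^n}\leq M$ together with the density of $D(A)$ extends it to all $x\in X$ by a standard $3\veps$ argument, and the uniformity on $[0,t_0]$ is inherited from the Trotter--Kato step together with the uniform smallness of the Chernoff estimate over $t\in[0,t_0]$.
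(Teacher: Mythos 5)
Your proposal is correct and takes essentially the same route as the paper, which does not reprove Theorem \ref{Thm:Chernoff} itself but explicitly states that its proof rests on the Trotter--Neveu--Kato theorem (\cite{engel1999one}, Theorem III.4.9) and Chernoff's $\sqrt{n}$-Lemma --- exactly your generation and approximation parts, with the Trotter--Neveu--Kato resolvent-convergence machinery merely unfolded by hand. One cosmetic slip worth noting: triviality of the pseudo-resolvent's kernel comes from the density of $D(A)$ in the common range (which gives $\lambda R_\lambda \to I$ strongly), whereas the density of $(\lambda_0-A)D(A)$ is what you had already spent on constructing $R_{\lambda_0}$; this misattribution does not affect correctness.
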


What this theorem says is that we can approximate the action of the semigroup
$T$ by dividing the time over which the semigroup %acted
acts into uniform pieces and
%approximating
approximate the action on each one by a simpler action.

Do we really need to keep the division uniform? Consider the very simple case of
the exponential function:
\begin{equation}
  e^t =\lim_{n\to \infty} \left(1+\frac{t}{n}\right)^n.
\end{equation}
In general, if $\left(a_n\right)_{n\in \N}$ is a sequence such that $a_n\to
\infty$ %if
as $n\to \infty$ then
\begin{equation}
  e^t =\lim_{n\to \infty} \left(1+\frac{t}{a_n}\right)^{a_n}.
\end{equation}

On the other hand, we can think about this approximation in terms of the
corresponding interval: Consider the interval $[0,t]$, or for simplicity let 
%it be the interval $[0,1]$,
the interval be $[0,1]$,
and divide it into $n$ equal non-overlapping intervals that cover the whole of
it. The approximation above corresponds roughly (exactly, if all the $a_n$ are
positive integers) to this particular partition: The length of one step is $\frac{1}{a_n}$, but the division is still uniform in the sense that each subinterval has the same length. Instead of
a uniform partitioning like that, one can consider an even more general division and the corresponding limiting product formula. This problem, though, seems to be a generalization, it holds great importance in several stroboscopic processes, like the quantum Zeno effect or dynamical decoupling \cite{Facchi}, where in experimental settings the uniform division of time is not realistic.

\begin{prop}
 \label{Prop:exp}
 For each $n\in \N$, let $\left(a_{n,i}\right)_{i=1}^n$ be a sequence of positive numbers such
that $\sum_{i=1}^n a_{n,i}=1$. If, in addition, we assume that
\begin{equation}
  \lim_{n\to \infty} \max_{i} a_{n,i} =0, \nonumber
\end{equation}
then
\begin{equation}
  e^t = \lim_{n\to \infty} \prod_{i=1}^n \left(1+a_{n,i} t\right). \label{propprove}
\end{equation}
\end{prop}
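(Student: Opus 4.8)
The plan is to take logarithms and reduce the assertion to the convergence of a sum of scalars, then to extract the linear part using the normalization $\sum_{i=1}^n a_{n,i}=1$, and finally to show that the remaining quadratic part is negligible precisely because of the smallness hypothesis on $\max_i a_{n,i}$.

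First I would fix $t$ and observe that, since $\max_i a_{n,i}\to 0$, there is an $N$ such that $a_{n,i}\abs{t}\le 1/2$ for every $i$ whenever $n\ge N$. For such $n$ every factor $1+a_{n,i}t$ is positive, so the product has a well-defined logarithm, and by continuity of $\exp$ it is enough to prove that $\sum_{i=1}^n\log(1+a_{n,i}t)\to t$. Next I would write, using $\sum_{i=1}^n a_{n,i}=1$,
\[
  \sum_{i=1}^n\log\!\left(1+a_{n,i}t\right)-t=\sum_{i=1}^n\Bigl[\log\!\left(1+a_{n,i}t\right)-a_{n,i}t\Bigr],
\]
and estimate the right-hand side by the elementary bound $\abs{\log(1+x)-x}\le x^2$, valid for $\abs{x}\le 1/2$. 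This yields, for $n\ge N$,
\[
  \abs{\sum_{i=1}^n\log\!\left(1+a_{n,i}t\right)-t}\le t^2\sum_{i=1}^n a_{n,i}^2.
\]

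The decisive and essentially only nontrivial step is the collapse of the quadratic sum,
\[
  \sum_{i=1}^n a_{n,i}^2\le\Bigl(\max_i a_{n,i}\Bigr)\sum_{i=1}^n a_{n,i}=\max_i a_{n,i},
\]
which is exactly where the hypothesis $\max_i a_{n,i}\to 0$ enters. Combined with the previous inequality this gives an error of at most $t^2\max_i a_{n,i}\to 0$, proving the claim. I expect the main obstacle to be conceptual rather than computational: a termwise comparison is doomed because the number of factors grows without bound, and the content of the proposition is that the uniform smallness of the $a_{n,i}$ is precisely the right amount of control to defeat this accumulation.

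Finally, I would note an equivalent route that avoids logarithms: comparing $\prod_i(1+a_{n,i}t)$ with $\prod_i e^{a_{n,i}t}=e^t$ through a telescoping sum, using $\abs{1+x}\le e^{\abs{x}}$ to bound the partial products and $\abs{e^x-1-x}\le\tfrac12 x^2 e^{\abs{x}}$ to bound each discrepancy. This leads to the same estimate $t^2\max_i a_{n,i}\to 0$ and is the variant most likely to generalize to the operator-valued setting underlying the paper's main extension of Theorem~\ref{Thm:Chernoff}.
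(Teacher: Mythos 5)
Your proof is correct, and it is essentially the standard argument: the paper itself does not prove Proposition~\ref{Prop:exp} but cites Lemma~4.8 of \cite{kallenberg2002foundations} on null arrays, whose proof is the same logarithm-plus-second-order-Taylor-bound scheme you use, with the quadratic sum collapsed via $\sum_i a_{n,i}^2\le\max_i a_{n,i}$. Your handling of negative $t$ (forcing $a_{n,i}\abs{t}\le 1/2$ for large $n$) and the bound $\abs{\log(1+x)-x}\le x^2$ on that range are both sound, so there is nothing to fix.
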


The proof of this statement can be found, for example, Lemma $4.8$ in
\cite{kallenberg2002foundations}, where these kind of sequences are called "null
arrays". 

Application of this approach to semigroups immediately runs into a
difficulty. The original proof relies heavily on commutativity of certain products, so for any $t\in [0,\infty)$ we have 
\[
    V(t)^mV(t)=V(t)V(t)^m,
\]
for any $m\in\N$. If $\{t_i\}_{i=0}^m$ are all different, i.e, if $t_i=t_j$ iff $i=j$, then still we have 
\[
    V(t_0)\cdot \prod_{i=1}^m V(t_i)=\prod_{i=0}^{m-1}V(t_i)\cdot V(t_m),
\]
but we cannot claim that the latter product is 
\[
    V(t_m)\cdot\prod_{i=0}^{m-1}V(t_i).
\]

We lose the ability to rearrange the factors, which is the crux of the original proof. 

%\begin{equation}
%  V(t)V(t')=V(t')V(t) 
%\end{equation}
%if $t=t'$ (trivially), but this doesn't have to be true in general. 

In \cite{Smolyanov1}, the authors have proved the following statement and demonstrated it in Proposition $3$ of %\cite{Smolyanov1}. 
the same paper:
\begin{prop}
 \label{Prop:Smolyanov}
 Let $A$ be the generator of a strongly continuous semigroup on a Banach space $X$. Let $ V:[0,\infty) \to \mcal{B}(X)$ be a one-parameter family of contractions on $X$ such that
 \begin{equation}
 \label{eq:speccond}
 V(0)=I \quad \text{and} \quad \lim_{h \to 0}
 \frac{V(h)-I}{h}e^{aA} x = A e^{aA}x,
 \end{equation}
for all $a>0$ and $x\in D(A)$. Let the sequence
$\left(a_{n,i}\right)_{i=1}^n$  satisfy the assumptions in Proposition \ref{Prop:exp}. Then,
\begin{equation}
 \lim_{n \to \infty} \norm{\prod^n_{i=1} V(a_{n,i}t)x-e^{tA} x}=0,   
\end{equation}
for all $x \in X$.
\end{prop}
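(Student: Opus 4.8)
The plan is to compare the non-commutative product with the semigroup through a telescoping (Duhamel) identity and to control the resulting error using the regularizing effect of the semigroup encoded in \eqref{eq:speccond}. Write $S(b)=e^{bA}$, set $h_{n,j}=a_{n,j}t$, and let $b_{n,j}=\sum_{i>j}h_{n,i}$ be the accumulated remaining time, so that $\sum_j h_{n,j}=t$, each $b_{n,j}\in[0,t]$, and, by the semigroup law, $\prod_{i=j+1}^n S(h_{n,i})=S(b_{n,j})$. Since the full product of the $S$-factors equals $S(t)=e^{tA}$, inserting the chain of intermediate products gives
\[
 \prod_{i=1}^n V(h_{n,i})x-e^{tA}x=\sum_{j=1}^n\Big(\prod_{i=1}^{j-1}V(h_{n,i})\Big)\big(V(h_{n,j})-S(h_{n,j})\big)S(b_{n,j})x .
\]
Because each $V(h)$ is a contraction, the leading products have norm at most $1$, so
\[
 \norm{\prod_{i=1}^n V(h_{n,i})x-e^{tA}x}\le\sum_{j=1}^n\norm{\big(V(h_{n,j})-S(h_{n,j})\big)S(b_{n,j})x}.
\]

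The heart of the argument is then a uniform-in-$b$ estimate of the form $\norm{(V(h)-S(h))S(b)x}\le h\,\epsilon(h)$ with $\epsilon(h)\to0$ as $h\to0^+$, uniformly for $b\in[0,t]$. Granting this, and using $h_{n,j}\le\tau_n:=t\max_i a_{n,i}\to0$, the right-hand side above is bounded by $\sum_j h_{n,j}\,\epsilon(h_{n,j})\le t\sup_{0<h\le\tau_n}\epsilon(h)\to0$, which proves the claim for $x$ in a chosen dense set; the extension to arbitrary $x\in X$ is immediate from the uniform bound $\norm{\prod_i V(h_{n,i})}\le1$, boundedness of $e^{tA}$ on $[0,t]$, and a standard density argument. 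To produce $\epsilon(h)$ I would differentiate $S(b)x$ in $b$: for $x\in D(A)$ one has $S(b)x-x=\int_0^b S(\sigma)Ax\,d\sigma$, whence
\[
 \tfrac1h\big(V(h)-S(h)\big)S(b)x=\tfrac1h\big(V(h)-S(h)\big)x+\int_0^b \tfrac1h\big(V(h)-S(h)\big)S(\sigma)Ax\,d\sigma .
\]
Taking the supremum over $b\in[0,t]$ reduces everything to showing that the boundary term $\tfrac1h(V(h)-S(h))x$ and the integrand $\tfrac1h(V(h)-S(h))S(\sigma)Ax$ tend to $0$, the latter in a dominated fashion on $\sigma\in[0,t]$.

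Both limits are governed by \eqref{eq:speccond}: since $\tfrac1h(S(h)-I)e^{aA}x\to Ae^{aA}x$ always holds for the genuine semigroup, the hypothesis yields $\tfrac1h(V(h)-S(h))e^{aA}x\to0$ for every $a>0$. This is precisely why I would not work on all of $D(A)$ but on the dense subspace $\mcal D$ spanned by vectors of the form $e^{aA}y$ with $a>0$ and $y\in D(A)$ (a suitable space of smooth vectors): for $x=e^{aA}y\in\mcal D$ the boundary term is controlled because $\tfrac1h(V(h)-I)x\to Ax$ directly, and $S(\sigma)x=e^{(a+\sigma)A}y$ keeps the regularizing parameter $a+\sigma\ge a>0$ bounded away from $0$, so nothing degenerates as $\sigma\to0$. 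The pointwise convergence $\tfrac1h(V(h)-S(h))S(\sigma)Ax\to0$ then follows again from \eqref{eq:speccond}, now applied to the smoothed vector $S(\sigma)Ax$.

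The main obstacle is the passage from this pointwise convergence to the dominated, uniform convergence required for the integral and for the supremum over $b$: the operators $\tfrac1h(V(h)-I)$ are \emph{not} uniformly bounded as $h\to0$ (their norms may grow like $1/h$), so Banach--Steinhaus is unavailable and one cannot cheaply upgrade pointwise to uniform convergence. I would resolve this by iterating the same differentiation-in-$\sigma$ identity on $\mcal D$ to obtain equicontinuity in $\sigma$ of the family $\{\sigma\mapsto\tfrac1h(V(h)-S(h))S(\sigma)Ax\}_h$, uniformly in $h$; combined with pointwise convergence on a compact parameter interval that stays away from the bad value $0$ (again because $x\in\mcal D$), this delivers the uniform, hence dominated, convergence and closes the argument. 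Keeping the effective smoothing parameter bounded below throughout, by the very choice of $\mcal D$, is the device that prevents the boundary layer at $\sigma=0$ (equivalently $b=0$) from spoiling the estimate.
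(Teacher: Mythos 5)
Your telescoping skeleton is sound and is in fact the canonical route to this statement (note the paper itself gives no proof of Proposition \ref{Prop:Smolyanov}; it quotes it from \cite{Smolyanov1}, where the argument is exactly such a comparison along the semigroup orbit). The identity $\prod_{i=1}^n V(h_{n,i})x-e^{tA}x=\sum_{j=1}^n\bigl(\prod_{i=1}^{j-1}V(h_{n,i})\bigr)\bigl(V(h_{n,j})-S(h_{n,j})\bigr)S(b_{n,j})x$ is correct, the reduction to $\sup_{b\in[0,t]}\frac1h\norm{(V(h)-S(h))S(b)x}\to 0$ on a dense set is the right one, the semigroup half is indeed harmless (uniformity in $b$ is free there, since $\frac1h(S(h)-I)S(b)x=S(b)\frac1h(S(h)-I)x$ and the $S(b)$ are uniformly bounded on $[0,t]$), your choice of the dense subspace $\mcal{D}$ of vectors $e^{aA}y$ correctly keeps the smoothing parameter away from $0$, and the final density argument is fine.

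The genuine gap is at the step you yourself flag as the main obstacle, and your proposed repair is circular. To get equicontinuity of $F_h(\sigma)=\frac1h(V(h)-S(h))S(\sigma)Ax$ uniformly in $h$, the iterated identity gives $F_h(\sigma)-F_h(\sigma')=\int_{\sigma'}^{\sigma}\frac1h(V(h)-S(h))S(u)A^2x\,du$ (already requiring $y\in D(A^2)$, which is fixable), and to estimate this you need $\sup_{u\in[0,t],\,0<h\le h_0}\norm{\frac1h(V(h)-S(h))S(u)z}<\infty$ for $z=A^2x$ --- a uniform-in-$u$ bound of exactly the kind you are trying to establish, one rung up the smoothness ladder. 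Hypothesis \eqref{eq:speccond} is pointwise in $a$: it gives boundedness of this family at each \emph{fixed} $u$, but since $\frac1h(V(h)-I)$ is not uniformly bounded and the orbit $\{S(u)z:u\in[0,t]\}$ is merely a compact curve inside a non-closed subspace, neither Banach--Steinhaus nor compactness upgrades pointwise to uniform boundedness; a dominated-convergence variant fails too, because the envelope $u\mapsto\sup_h\norm{\frac1h(V(h)-S(h))S(u)z}$ is finite at each point but need not be integrable (or even measurable, since no regularity of $h\mapsto V(h)$ is assumed). Every iteration reproduces the identical missing bound, so the regress never terminates. This is precisely why a proof along your lines needs --- and the argument behind Proposition \ref{Prop:Smolyanov} in \cite{Smolyanov1} in effect uses --- the convergence in \eqref{eq:speccond} \emph{uniformly} in the parameter $a$ on compact intervals, i.e. $\norm{(V(h)-e^{hA})e^{aA}x}=o(h)$ uniformly for $a\in[0,t]$; with that reading, your telescoping finishes immediately and none of the equicontinuity machinery is needed. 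It is worth observing that the paper's own alternative (Lemma \ref{Lemma} together with Theorem \ref{Thm:main}) evades this exact obstruction by different hypotheses: assuming commutativity of the $V$'s and the $\ell^1$ condition \eqref{eq:absdev}, all comparisons stay within the uniformly bounded family $V^n(t/n)-\prod_{i=1}^n V(a_{n,i}t)$ (norm at most $2$), so there the pointwise-to-everywhere upgrade via Banach--Steinhaus is legitimate --- exactly the move that is unavailable in your setting.
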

%This is demonstrated in Proposition $3$ of \cite{Smolyanov1}.

In order for the above proposition to work, we need an immediate assumption that $A$ generates a strongly continuous semigroup and fulfills the condition in \eqref{eq:speccond}. This is closely related to the Lax Equivalence Theorem \cite{Lax}, because if $V$ satisfies \eqref{eq:speccond} then it is called consistent finite difference scheme.

However, if we want to demonstrate convergence without assuming in advance the generator property of $A$ we need other assumptions. Our strategy relies on the use of the Second Trotter-Kato Approximation Theorem (\cite{engel1999one}, Chapter III $4.9$). This approach creates a greater hurdle, and in order to mitigate this, we introduced a couple of assumptions. We consider $V(t)V(t')=V(t')V(t)$ for all $t, t' \geq0$, which does not imply that $V(t)$ is a semigroup (see Examples \ref{ex1} and \ref{ex2}).
Furthermore, we decided to work with contractions. We will need a stronger assumption for the
array $\left(a_{n,i}\right)_{i=1}^n$, $n\in \N$, as well. It should not deviate
from the uniform distribution too much:

\begin{equation}
  \lim_{n\to \infty} \sum_{i=1}^n \abs{\frac{1}{n} -a_{n,i}}=0.\label{eq:absdev}  
\end{equation}

\begin{lem}\label{lem:stronger}
  Every sequence of positive numbers $\left(a_{n,i}\right)_{i=1}^n$, $n\in \N$ with $\sum_{i=1}^n a_{n,i}=1$ that satisfies
  Eq.\eqref{eq:absdev} also satisfies the condition in Proposition
  \ref{Prop:exp}. 
\end{lem}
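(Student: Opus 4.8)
The plan is to bound the maximal entry directly by the $\ell^1$-deviation appearing in Eq.~\eqref{eq:absdev}. The key observation is elementary: for a fixed index $i$ and any $n$ we may write $a_{n,i} = \frac{1}{n} + \left(a_{n,i} - \frac{1}{n}\right)$, so by the triangle inequality $a_{n,i} \leq \frac{1}{n} + \abs{a_{n,i} - \frac{1}{n}}$. Since a single absolute deviation can never exceed the sum of all of them, this gives
\[
  a_{n,i} \leq \frac{1}{n} + \sum_{j=1}^n \abs{\frac{1}{n} - a_{n,j}},
\]
and the point is that the right-hand side no longer depends on $i$.

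Because for each fixed $n$ there are only finitely many entries, the maximum $\max_i a_{n,i}$ is attained, and taking the maximum over $i \in \{1,\dots,n\}$ in the bound above leaves the (index-free) right-hand side unchanged:
\[
  \max_i a_{n,i} \leq \frac{1}{n} + \sum_{j=1}^n \abs{\frac{1}{n} - a_{n,j}}.
\]
Letting $n \to \infty$, the first term tends to $0$ while the second tends to $0$ by the hypothesis \eqref{eq:absdev}, so that $\lim_{n\to\infty}\max_i a_{n,i} = 0$, which is exactly the condition required in Proposition~\ref{Prop:exp}.

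There is no real obstacle here: the entire argument is a single triangle-inequality estimate, and the only step worth emphasizing is the passage from one deviation $\abs{a_{n,i}-\frac1n}$ to the full sum $\sum_j \abs{\frac1n-a_{n,j}}$, which is what produces a bound uniform in $i$ and lets us commute it past the maximum. I would also remark that the implication is strict—the normalization and positivity are not even needed for this direction—so \eqref{eq:absdev} is genuinely a stronger hypothesis than the null-array condition, consistent with the name of the lemma.
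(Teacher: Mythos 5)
Your proof is correct and is essentially the paper's own argument: both rest on the single observation that $a_{n,i}\leq \frac{1}{n}+\abs{\frac{1}{n}-a_{n,i}}\leq \frac{1}{n}+\sum_{j=1}^n\abs{\frac{1}{n}-a_{n,j}}$, with both terms vanishing as $n\to\infty$. The only difference is presentational — you phrase it as an explicit uniform bound passed through the maximum and then take limits, while the paper runs the same estimate in $\veps$--$N$ form (choosing $N$ so that $\frac{1}{n}<\veps$ and the sum is $<\veps$, hence $\max_i a_{n,i}<2\veps$).
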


The converse is not true. Let $\left(a_{n,i}\right)_{i=1}^n$, $n\in \N$ be the
array defined the following way.
\[
  a_{n,i}=
  \begin{cases}
    \frac{1}{n} + (-1)^i \frac{1}{2n} & \text{if $n$ is even},\\
    \hfil \frac{1}{n}\hfil & \text{if $n$ is odd}.
  \end{cases}
\]

It is true that
\[
  \lim_{n\to \infty} \max_{i} a_{n,i} =0
\]
but 
\[
  \lim_{n\to \infty} \sum_{i=1}^{2n} \abs{\frac{1}{2n} -a_{2n,i}}=\frac{1}{2}.
\]

The proof of Lemma \ref{lem:stronger} is straightforward:
\begin{proof}
  Since
  \[
    \lim_{n\to \infty} \sum_{i=1}^n \abs{\frac{1}{n} -a_{n,i}}=0
  \]
  for every $\veps >0$ there is an $N\in \N$ such that if $n\geq N$ then
  $\frac{1}{n} < \veps$ and
  \[
    \sum_{i=1}^n \abs{\frac{1}{n} -a_{n,i}} <\veps.
  \]
  This implies that if $n\geq N$ then for all $1\leq i\leq n$ we have
  $\abs{\frac{1}{n} -a_{n,i}} <\veps$ and therefore $0\leq a_{n,i}<
  2\veps$. Consequently $\max_{i} a_{n,i} <2\veps$ and in the limit
  \[
    \lim_{n\to \infty} \max_{i} a_{n,i} =0.
  \]
  
\end{proof}

The following Lemma contains the precise formulation of the assumptions
that were mentioned above.

\begin{lem}
\label{Lemma}
Let $ V:[0,\infty) \to \mcal{B}(X)$ and the sequence
$\left(a_{n,i}\right)_{i=1}^n$  satisfy the assumptions in Theorem
\ref{Thm:Chernoff}, Proposition \ref{Prop:exp} and Eq.\eqref{eq:absdev},
respectively. Suppose $V(t)$ is a contraction for $t\geq 0$ and $V(t)V(t')=V(t')V(t)$ for all $t, t' \geq0$.  In this case,
\begin{equation}
 \lim_{n \to \infty} \norm{V^n(t/n)x - \prod^n_{i=1} V(a_{n,i}t)x}=0 \label{eq:extra} 
\end{equation}
 for all $x \in X$ and uniformly for $t \in [0,t_0]$.
\end{lem}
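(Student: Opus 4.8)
The plan is to compare the two $n$-fold products via an elementary telescoping identity and then use commutativity to make every resulting error factor act on a single fixed vector from the core $D(A)$. Applying $\prod_i A_i-\prod_i B_i=\sum_k\bigl(\prod_{j<k}A_j\bigr)(A_k-B_k)\bigl(\prod_{j>k}B_j\bigr)$ with $A_i=V(t/n)$ and $B_i=V(a_{n,i}t)$ gives
\[
V^n(t/n) - \prod_{i=1}^n V(a_{n,i}t) = \sum_{k=1}^n V^{k-1}(t/n)\bigl(V(t/n)-V(a_{n,k}t)\bigr)\prod_{i=k+1}^n V(a_{n,i}t).
\]
Since every $V(\cdot)$ is a contraction, the leading block $V^{k-1}(t/n)$ is harmless and can be discarded from a norm estimate. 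The apparent difficulty is that the middle difference $V(t/n)-V(a_{n,k}t)$ acts on the $k$-dependent vector $\prod_{i>k}V(a_{n,i}t)x$; the generator hypothesis controls $V(h)-I$ only pointwise, so this difference cannot be bounded in operator norm.

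Commutativity removes this obstacle. Because $V(s)V(s')=V(s')V(s)$, the trailing product commutes past the difference, so for $x\in D(A)$ I would rewrite the $k$-th term as $\prod_{i>k}V(a_{n,i}t)\bigl(V(t/n)-V(a_{n,k}t)\bigr)x$; contractivity of the product then yields
\[
\norm{\Bigl(V^n(t/n)-\prod_{i=1}^n V(a_{n,i}t)\Bigr)x} \le \sum_{k=1}^n \norm{\bigl(V(t/n)-V(a_{n,k}t)\bigr)x}.
\]
Every summand now acts on the same fixed $x\in D(A)$, which is exactly what the generator condition is built to handle.

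To estimate the right-hand side I would insert the generator by writing, for $h\ge 0$, $V(h)x-x = hAx + h\,r(h)$ with $r(h):=\tfrac{V(h)x-x}{h}-Ax$ and $\norm{r(h)}\to 0$ as $h\to 0^+$. Subtracting the two expansions gives
\[
\norm{\bigl(V(t/n)-V(a_{n,k}t)\bigr)x} \le t\,\abs{\tfrac1n-a_{n,k}}\,\norm{Ax} + \tfrac{t}{n}\,\norm{r(t/n)} + a_{n,k}t\,\norm{r(a_{n,k}t)}.
\]
Summing over $k$, the first contribution is at most $t_0\norm{Ax}\sum_k\abs{\tfrac1n-a_{n,k}}$, which vanishes by Eq.~\eqref{eq:absdev}; the second equals $t\,\norm{r(t/n)}\le t_0\sup_{0<h\le t_0/n}\norm{r(h)}\to 0$; and, using $\sum_k a_{n,k}t=t\le t_0$, the third is bounded by $t_0\sup_{0<h\le t_0\max_i a_{n,i}}\norm{r(h)}$, which vanishes because $\max_i a_{n,i}\to 0$ by Proposition~\ref{Prop:exp}. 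All three bounds are independent of $t$ beyond the factor $t_0$, so the convergence is uniform on $[0,t_0]$, proving the claim for $x\in D(A)$.

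Finally, I would pass from the dense subspace $D(A)$ to all of $X$ by a three-$\veps$ argument: the operator $V^n(t/n)-\prod_i V(a_{n,i}t)$ has norm at most $2$ uniformly in $n$ and $t$, being a difference of contractions, so approximating an arbitrary $x\in X$ by some $x'\in D(A)$ transfers the uniform convergence without disturbing the uniformity in $t$. The only genuinely delicate point is the commutativity step of the second paragraph, since it is precisely what allows the pointwise generator estimate to survive multiplication by the $k$-dependent contraction blocks produced by telescoping.
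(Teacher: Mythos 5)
Your proposal is correct and follows essentially the same route as the paper's proof: a telescoping/commutativity/contraction reduction to $\sum_{k=1}^n \norm{\bigl(V(t/n)-V(a_{n,k}t)\bigr)x}$, then a generator-based split into a term bounded by $t_0\norm{Ax}\sum_k \abs{\tfrac{1}{n}-a_{n,k}}$ plus remainder terms killed by $\max_i a_{n,i}\to 0$, and finally an extension from $D(A)$ to $X$ via the uniform bound $2$ on the difference of contractions (the paper cites Banach--Steinhaus where you spell out the equivalent three-$\veps$ argument). Your explicit suprema over $h$ make the uniformity in $t\in[0,t_0]$ a bit more transparent than the paper's version, but the underlying argument is the same.
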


\begin{proof}

  Let $n\in \N$ and consider the norm of the difference in \eqref{eq:extra}:
  \begin{align*}
    &\norm{V^n(t/n)x - \prod^n_{i=1} V(a_{n,i}t)x}= \Bigg\| V^n(t/n)x - V^{n-1}(t/n)V(a_{n,n}t)x 
     \\
     &\left. \qquad + V^{n-1}(t/n)V(a_{n,n}t)x - \prod^n_{i=1} V(a_{n,i}t)x\right\|\\
    &\leq \norm{V(t/n)x - V(a_{n,n}t)x} + \norm{ \left(
      V^{n-1}(t/n) - \prod^{n}_{i=2} V(a_{n,i}t)\right)V(a_{n,n}t)x} \\
    &\leq  \sum^n_{i=1} \norm{\Big( V(t/n) - V(a_{n,i}t)\Big) x},
  \end{align*}
  where we used the commutativity of $V(t)$ and the fact that they are contractions. The last inequality is obtained by induction. 

 For $s>0$, we define
 \begin{equation}
     A_s:=\frac{V(s)-I}{s} \in \mcal{B}(X),
 \end{equation}
where $A_s x \to Ax$ for all $x \in D(A)$ as $s\to 0$. Then,
\begin{eqnarray}
  &&\norm{V(t/n)x - V(a_{n,i}t)x} = t\norm{\frac{1}{n}A_{t/n}x-a_{n,i} A_{a_{n,i}t}x} \nonumber \\
  &&\leq t \cdot \abs{\frac{1}{n}-a_{n,i}} \norm{A_{t/n}x} + t \cdot a_{n,i} \cdot \norm{A_{t/n}x-A_{a_{n,i}t}x} 
\end{eqnarray}
and observe that
\begin{eqnarray}
 &&\sum^n_{i=1} \norm{\Big( V(t/n) - V(a_{n,i}t)\Big) x} \leq t \cdot \norm{A_{t/n}x} \sum_{i=1}^n \abs{\frac{1}{n}-a_{n,i}} \nonumber \\
 &&+ t \cdot \underbrace{\sum^n_{i=1} a_{n,i}}_{=1} \cdot \Big( \norm{A_{t/n}x-Ax} + \norm{A_{a_{n,i}t}x-Ax}\Big). 
\end{eqnarray}
Since
  \begin{equation}
    \lim_{n\to \infty} \sum_{i=1}^n \abs{\frac{1}{n}-a_{n,i}} =0,
  \end{equation}
and hence 
\[
  \lim_{n\to \infty} \max_{i} a_{n,i} =0,
\]
the statement is true for all $x\in D(A)$ and uniformly for $t \in [0,t_0]$. However, 
\begin{equation}
\norm{V^n(t/n) - \prod^n_{i=1} V(a_{n,i}t)} \leq 2
\end{equation}
and thus the Banach-Steinhaus theorem (\cite{Dunford}, Theorem II.1.18) gives the desired conclusion. 
\end{proof}

\section{Main result}

Our main result is the following.

\begin{thm}
\label{Thm:main}
Let $ V:[0,\infty) \to \mcal{B}(X)$ be a contraction satisfying the assumptions in Theorem
\ref{Thm:Chernoff} and $V(t)V(t')=V(t')V(t)$ for all $t, t' \geq0$. Let $\left(a_{n,i}\right)_{i=1}^n$, $n\in \N$  be a sequence of positive numbers such that $\sum_{i=1}^n a_{n,i}=1$ and
\begin{equation}
  \lim_{n\to \infty} \sum_{i=1}^n \abs{\frac{1}{n} -a_{n,i}}=0.  
\end{equation}
Then
\begin{equation}
  T(t)x = \lim_{n\to \infty}  \prod^n_{i=1} V(a_{n,i}t)x, \label{eq:Chernoff2}
\end{equation}
for all $x \in X$ and uniformly for $t \in [0,t_0]$.
\end{thm}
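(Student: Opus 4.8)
The plan is to obtain Theorem \ref{Thm:main} by assembling the two convergence results already in hand: Chernoff's product formula (Theorem \ref{Thm:Chernoff}) for the \emph{uniform} partition, and Lemma \ref{Lemma}, which bounds the discrepancy between the uniform product $V^n(t/n)$ and the non-uniform product $\prod_{i=1}^n V(a_{n,i}t)$. First I would split the quantity of interest by a single triangle inequality,
\[
  \norm{\prod_{i=1}^n V(a_{n,i}t)x - T(t)x}
  \leq \norm{\prod_{i=1}^n V(a_{n,i}t)x - V^n(t/n)x}
    + \norm{V^n(t/n)x - T(t)x},
\]
so that it suffices to show each term on the right tends to $0$ as $n \to \infty$, uniformly for $t \in [0,t_0]$.

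For the second term I would invoke Theorem \ref{Thm:Chernoff}. Since $V(t)$ is a contraction, $\norm{[V(t)]^m} \leq 1$ for every $m \in \N$ and $t \geq 0$, so $V$ is power bounded with $M = 1$; together with $V(0) = I$ and the generator hypotheses inherited from Theorem \ref{Thm:Chernoff}, this guarantees that $\overline{A}$ generates a strongly continuous semigroup $(T(t))_{t \geq 0}$ and that $V^n(t/n)x \to T(t)x$ uniformly on $[0,t_0]$ by \eqref{eq:Chernoff}. This simultaneously identifies the limiting object $T(t)$ and disposes of the second term.

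For the first term I would apply Lemma \ref{Lemma} directly. The array $(a_{n,i})_{i=1}^n$ satisfies \eqref{eq:absdev} by assumption, hence also the null-array condition of Proposition \ref{Prop:exp} via Lemma \ref{lem:stronger}; together with the contraction and commutativity hypotheses of Lemma \ref{Lemma} this yields $\norm{V^n(t/n)x - \prod_{i=1}^n V(a_{n,i}t)x} \to 0$ for every $x \in X$, uniformly on $[0,t_0]$. Substituting both estimates into the triangle inequality gives \eqref{eq:Chernoff2} with the claimed uniform convergence.

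Because both pieces are already furnished by the preceding results, I expect no genuine difficulty in the final assembly; the only point needing care is to ensure that \emph{both} limits are uniform on $[0,t_0]$, so that their sum is as well — which is precisely what Theorem \ref{Thm:Chernoff} and Lemma \ref{Lemma} deliver. The substantive work lies entirely in Lemma \ref{Lemma}, where the rearrangement step unavailable for a non-commutative product is replaced by the telescoping estimate combined with the absolute-deviation control \eqref{eq:absdev}. I would also flag that Lemma \ref{Lemma} relies on the commutativity $V(t)V(t') = V(t')V(t)$, so this must be read as part of the standing assumptions when the lemma is applied here.
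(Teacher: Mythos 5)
Your proposal is correct and follows essentially the same route as the paper: the same triangle-inequality decomposition, with Theorem \ref{Thm:Chernoff} handling the uniform-partition term and Lemma \ref{Lemma} handling the discrepancy term, both uniformly on $[0,t_0]$. Your observation that the commutativity hypothesis $V(t)V(t')=V(t')V(t)$ must be carried along as a standing assumption when invoking Lemma \ref{Lemma} is a fair and accurate reading, since the theorem statement itself does not restate it.
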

\begin{proof}
First, Theorem \ref{Thm:Chernoff} yields
\begin{equation}
\lim_{n \to \infty}  \norm{T(t)x-V^n(t/n)x}=0 \label{eq:main1}
\end{equation}
for all $x \in X$ and uniformly for $t \in [0,t_0]$, whose proof is based on the Second Trotter-Kato Approximation Theorem  (\cite{engel1999one}, Theorem III.4.9) and the, so-called, $\sqrt{n}$-Lemma (\cite{Chernoff}, Lemma 2). Then, Lemma \ref{Lemma} implies that
\begin{equation}
 \lim_{n \to \infty} \norm{V^n(t/n)x - \prod^n_{i=1} V(a_{n,i}t)x}=0, \label{eq:main2}
\end{equation}
for all $x \in X$ and uniformly for $t \in [0,t_0]$. Since,
\begin{eqnarray}
 \norm{T(t)x- \prod^n_{i=1} V(a_{n,i}t)x} &\leqslant& \norm{T(t)x-V^n(t/n)x} \nonumber \\
 &&+ \norm{V^n(t/n)x - \prod^n_{i=1} V(a_{n,i}t)x}, \nonumber
\end{eqnarray} 
the combination of \eqref{eq:main1} and \eqref{eq:main2} yields \eqref{eq:Chernoff2}.
\end{proof}

As Chernoff's Theorem provides a generalized Lie-Trotter product formula \cite{Trotter}, an immediate application of our theorem yields further Lie-Trotter type product formulas.

\begin{cor}
\label{Cor:Trotter}
 Suppose that $(T_1(t))_{t\geq 0}$ and $(T_2(t))_{t\geq 0}$ are strongly continuous contraction semigroups on a Banach space $X$ with generators $\left(A_1, D(A_1) \right)$ and $\left(A_2, D(A_2) \right)$, 
 respectively. If the assumptions of Theorem \ref{Thm:main} hold, i.e. $V(t)=T_1(t) T_2(t)$ has the aforementioned commutativity property and the sequence $\left(a_{n,i}\right)_{i=1}^n$ behaves like in that Theorem, 
 % satisfy the assumptions in Theorem \ref{Thm:main}, 
 $A_1+A_2$ is closed, and generates a strongly continuous semigroup $(T_3(t))_{t\geq 0}$, then 
 \begin{eqnarray}
  T_3(t)x &=& \lim_{n\to \infty}  \prod^n_{i=1} T_1(a_{n,i}t) T_2(a_{n,i}t)x, \label{eq:Trotter}
 \end{eqnarray}
for all $x \in X$ and uniformly for $t \in [0,t_0]$.
\end{cor}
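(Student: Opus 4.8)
The plan is to apply Theorem \ref{Thm:main} to the one-parameter family $V(t) := T_1(t)T_2(t)$ and then to identify the semigroup it produces with $(T_3(t))_{t\geq 0}$. The elementary hypotheses are immediate: since $T_1$ and $T_2$ are contraction semigroups, $\norm{V(t)x} \leq \norm{T_1(t)}\,\norm{T_2(t)x} \leq \norm{x}$, so each $V(t)$ is a contraction and hence power bounded with $M=1$, and $V(0) = T_1(0)T_2(0) = I$. Thus the standing assumptions of Theorem \ref{Thm:Chernoff} on $V$ reduce to identifying its Chernoff generator.

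The substantive step is exactly that computation. For $x \in D(A_1)\cap D(A_2)$ I would use the splitting
\begin{equation}
\frac{V(h)x - x}{h} = T_1(h)\,\frac{T_2(h)x - x}{h} + \frac{T_1(h)x - x}{h}. \nonumber
\end{equation}
The second summand tends to $A_1x$ by the definition of $A_1$. For the first, I would write $T_1(h)\frac{T_2(h)x-x}{h} - A_2x = T_1(h)\bigl(\frac{T_2(h)x-x}{h} - A_2x\bigr) + (T_1(h)A_2x - A_2x)$, bounding the first bracket by $\norm{T_1(h)}\leq 1$ together with $\frac{T_2(h)x-x}{h}\to A_2x$, while the second tends to $0$ by strong continuity of $T_1$. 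Hence the Chernoff limit $Ax$ exists on $D(A_1)\cap D(A_2)$ and equals $(A_1+A_2)x$ there. Choosing $D(A) = D(A_1)\cap D(A_2)$, the density of $D(A)$ and the range condition $(\lambda_0-A)D(A)$ dense follow from the hypothesis that $A_1+A_2$ generates a strongly continuous semigroup: its domain is dense and some $\lambda_0>0$ lies in its resolvent set, so $(\lambda_0-(A_1+A_2))D(A_1+A_2) = X$. Since $A_1+A_2$ is assumed closed and agrees with $A$ on its full domain, the Chernoff closure satisfies $\overline{A} = A_1+A_2$, and therefore the semigroup $(T(t))$ delivered by Theorem \ref{Thm:Chernoff} is precisely $(T_3(t))$. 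Feeding all of this into Theorem \ref{Thm:main} would give $T_3(t)x = \lim_{n}\prod_{i=1}^n T_1(a_{n,i}t)T_2(a_{n,i}t)x$ uniformly on $[0,t_0]$.

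The hard part is the one hypothesis I have not yet used. Theorem \ref{Thm:main} is proved through Lemma \ref{Lemma}, whose telescoping estimate rearranges the non-uniform product and thereby requires $V(t)V(t') = V(t')V(t)$ for all $t,t'\geq 0$. For $V = T_1T_2$ this reads $T_1(t)T_2(t)T_1(t')T_2(t') = T_1(t')T_2(t')T_1(t)T_2(t)$, which does not follow from the contraction and semigroup properties alone and generically fails when $A_1$ and $A_2$ do not commute. I therefore expect the commutativity verification to be the crux of the argument: it is automatic when the two semigroups commute (equivalently, on a common core, when $A_1$ and $A_2$ commute), and under that condition the corollary goes through verbatim by the route above. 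In the genuinely non-commuting Lie--Trotter situation one cannot invoke Lemma \ref{Lemma} as stated, and the rearrangement step would have to be reworked; flagging precisely where commutativity enters is, in my view, the essential content of a careful proof.
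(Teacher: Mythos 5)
Your main line is exactly the paper's proof: the paper also sets $V(t)=T_1(t)T_2(t)$, computes $\lim_{h\to 0}h^{-1}\left(V(h)x-x\right)=A_1x+A_2x$ on $D(A_1+A_2)=D(A_1)\cap D(A_2)$ via the same splitting $T_1(h)\big(T_2(h)x-x\big)+T_1(h)x-x$, and then simply invokes Theorem \ref{Thm:main}; your verifications of the contraction property, of the range condition via the hypothesis that $A_1+A_2$ generates, and of the identification $\overline{A}=A_1+A_2$ are details the paper leaves implicit. The commutativity issue you flag at the end is real, and it is a gap in the paper's argument rather than in yours: Theorem \ref{Thm:main} as stated omits the hypothesis $V(t)V(t')=V(t')V(t)$, but its proof runs through Lemma \ref{Lemma}, whose telescoping estimate must commute factors past the differences $V(t/n)-V(a_{n,i}t)$ so that every term acts on the same fixed $x\in D(A)$ (otherwise one is left with $\big(V(t/n)-V(a_{n,i}t)\big)y_{n,i}$ for vectors $y_{n,i}$ that vary with $n$ and $i$ and need not lie in $D(A)$), and for $V=T_1T_2$ this commutativity fails unless the two semigroups commute. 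Note, moreover, that if $T_1$ and $T_2$ do commute then $V(t)V(t')=V(t+t')$, so $V$ is itself a semigroup and the corollary becomes essentially vacuous; hence the genuinely non-commuting Lie--Trotter case, which is the interesting one, is precisely the case the route through Lemma \ref{Lemma} does not cover, and repairing it would require a different ingredient, e.g.\ a condition of the type \eqref{eq:speccond} from Proposition \ref{Prop:Smolyanov}.
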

\begin{proof}
 If $x \in D(A_1+A_2)=D(A_1)\cap D(A_2)$ we have
\begin{equation}
\lim_{h\to 0} \frac{V(h)x-x}{h}= \lim_{h\to 0} \frac{1}{h} \left[T_1(h) \big(T_2(h)x-x\big)+ T_1(h)x-x \right]=A_2 x+ A_1x. \nonumber  
\end{equation}
Now, the application of Theorem \ref{Thm:main} yields \eqref{eq:Trotter}. 
\end{proof}
Moreover, one can formulate further corollaries, which are deduced from the applications of Chernoff's Theorem, 
like the Post-Widder Inversion Formula (\cite{engel1999one}, Corollary II.5.5). We omit to do this, because these statements work verbatim as Corollary \ref{Cor:Trotter}.  

\section{Examples}

\begin{ex}
\label{ex1}
 Let $\mathcal{H}$ be a Hilbert space and let $L$ be a bounded self-adjoint operator in $\mathcal{H}$. We consider the following contraction in the Banach space $\mathcal{B}_1(\mathcal{H})$ 
 of trace class operators
\begin{equation}
V(t)(X)=\sqrt{\frac{\gamma}{\pi}} \int^{\infty}_{-\infty}\, dy \, e^{-\gamma/2 (y-\sqrt{t}L)^2} X e^{-\gamma/2 (y-\sqrt{t}L)^2}, \quad X \in \mathcal{B}_1(\mathcal{H}), \label{eq:contmeas}    
\end{equation}
with $\gamma>0$. Furthermore, we have $V(0)(X)=X$ and $V(t)V(t')=V(t')V(t)$. In the context of an axiomatic formulation of quantum mechanics, an interesting question is the dynamic of continuous observations with generalized observables 
\cite{Barchielli}. 
If the duration of the whole observation is $t$ and $t/n$ is the time interval between two subsequent measurements, the limit $n\to \infty$ of continuous observation, formulated as
\begin{equation}
 \lim_{n \to \infty} V^n(t/n)(X)=T(t)(X), \nonumber
\end{equation}
exists due to Chernoff's Theorem, where $(T(t))_{t\geq 0}$ is a uniformly continuous semigroup with generator
\begin{equation}
 A(X)= \lim_{h\to 0} \frac{V(h)(X)-X}{h}=-\frac{\gamma}{4} \left(L^2X + XL^2-2LXL\right).
\end{equation}
Now, we split $[0,t]$ into $n$ time intervals with lengths $t_1, t_2, \dots , t_n$ such that $a_{n,i}=t_i/t$. In this example, $A$ is a bounded operator, which generates a semigroup. Therefore, $V(t)$ satisfies \eqref{eq:speccond}, and thus it is enough that the sequence $\left(a_{n,i}\right)_{i=1}^n$  is subject only to the assumption of Proposition \ref{Prop:exp}. Then, the results of Proposition \ref{Prop:Smolyanov} implies 
\begin{equation}
 \lim_{n \to \infty} \prod^n_{i=1} V(a_{n,i}t)(X)=T(t)(X). \nonumber
\end{equation}
\end{ex}

\begin{ex}
\label{ex2}
Chernoff's Theorem has also application in probability theory. It is connected to the central limit theorem, which was established by Goldstein \cite{Goldstein}. Based on the result's of Goldstein, here we show the 
reader a central limit theorem following from Theorem \ref{Thm:main}. We consider $X=C_0(\mathbb{R})$ the space of uniformly continuous functions from $\mathbb{R}$ to $\mathbb{R}$, which is a Banach space with the 
supremum norm. Let $\xi_1$, $\xi_2$, $\dots$ be a sequence of independent identically distributed random variables with probability density functions $p_{\xi}(x) \in C_0(\mathbb{R})$ having two finite moments: $\mathbb{E}(\xi_i)=0$ and 
$\operatorname{Var}(\xi_i)=1$. For a random variable $\xi$ with $p_{\xi}(x) \in C_0(\mathbb{R})$ we define operators $V_\xi(t)$ on $C_0(\mathbb{R})$ by
\begin{equation}
 V_\xi(t)f(x)=\int_\mathbb{R}\,dy\, f(x-\sqrt{t}y) p_{\xi}(y)
\end{equation}
for all $f \in C_0(\mathbb{R})$ and $t>0$. These operators are all contractions, $V_\xi(0)f(x)=f(x)$, and
\begin{equation}
  V_\xi(t)V_\xi(t')f(x)= V_\xi(t')V_\xi(t)f(x). 
\end{equation}
Consider the random variable
\begin{equation}
 \zeta_n=\sum^n_{i=1} \sqrt{a_{n,i}} \xi_i
\end{equation}
where the sequence $\left(a_{n,i}\right)_{i=1}^n$  satisfies the assumptions in Theorem \ref{Thm:main}. One obtains the generator (\cite{Goldstein}, p. 199)
\begin{equation}
 Af(x)= \lim_{h\to 0} \frac{V_\xi(h)f(x)-f(x)}{h}=\frac{1}{2} f''(x) 
\end{equation}
with $D(A)=\{f(x) \in C_0(\mathbb{R}): \text{all}\, f' \, \text{and}\, f'' \,\text{exist and are in}\, C_0(\mathbb{R}) \}$. In fact, $A$ generates a strongly continuous semigroup $T(t)$ defined by
\begin{equation}
 T(t)f(x)=\frac{1}{\sqrt{2 \pi}}\int_\mathbb{R}\,dy\, f(x-\sqrt{t}y) e^{-\frac{y^2}{2}},
\end{equation}
which is called the one-dimensional diffusion semigroup \cite{engel1999one}. However, the condition in \eqref{eq:speccond} is not fulfilled. First, let us rewrite 
\eqref{eq:speccond} in the following form
\begin{equation}
    \lim_{h \to 0} \frac{V_\xi(h)T(t)f(x)-T(h+t)f(x)}{h}=0.
\end{equation}
If we pick 
\begin{equation}
 p_{\xi}(x)= \frac{3\sqrt{3}}{\sqrt{2 \pi}} x^2 e^{-\frac{3x^2}{2}},  
\end{equation}
then the limit is not zero. Hence, we need to use Theorem \ref{Thm:main}, which yields
\begin{equation}
 T(t)f(x) = \lim_{n\to \infty}  \prod^n_{i=1} V_{\xi_i}(a_{n,i}t) f(x), \quad f(x) \in C_0(\mathbb{R}).
\end{equation}
Since
\begin{equation}
 \prod^n_{i=1} V_{\xi_i}(a_{n,i}t) f(x)=\mathbb{E}\left(f(x-\sqrt{t} \zeta_n)\right)=\int_\mathbb{R}\,dy\, f(x-\sqrt{t}y) p_{\zeta_n}(y)
\end{equation}
for all $f \in C_0(\mathbb{R})$ and by using Lemma $1$ in \cite{Goldstein} with $t=1$, we obtain the following convergence in distribution 
\begin{equation}
    \lim_{n\to \infty} p_{\zeta_n}(x)=\frac{1}{\sqrt{2 \pi}} e^{-\frac{x^2}{2}},
\end{equation}
i.e.,
\begin{equation}
 \sum^n_{i=1} \sqrt{a_{n,i}} \xi_i  \xrightarrow{d} \mathcal{N}(0,1).
\end{equation}
\end{ex}

\section*{Acknowledgement}

J.Z.B. acknowledges support from AIDAS-AI, Data Analytics and Scalable Simulation, which is a Joint Virtual Laboratory gathering the Forschungszentrum J\"ulich and the French Alternative Energies and Atomic Energy Commission.

\bibliographystyle{plain}
\nocite{*}
\bibliography{chernoff}

% ------------------------------------------------------------------------
\end{document}